\definecolor{Red}{rgb}{1,0,0}
\definecolor{Blue}{rgb}{0,0,1}
\definecolor{Olive}{rgb}{0.41,0.55,0.13}
\definecolor{Green}{rgb}{0,1,0}
\definecolor{MGreen}{rgb}{0,0.8,0}
\definecolor{DGreen}{rgb}{0,0.55,0}
\definecolor{Yellow}{rgb}{1,1,0}
\definecolor{Cyan}{rgb}{0,1,1}
\definecolor{Magenta}{rgb}{1,0,1}
\definecolor{Orange}{rgb}{1,.5,0}
\definecolor{Violet}{rgb}{.5,0,.5}
\definecolor{Purple}{rgb}{.75,0,.25}
\definecolor{Brown}{rgb}{.75,.5,.25}
\definecolor{Grey}{rgb}{.5,.5,.5}
\definecolor{Pink}{rgb}{1,0,1}
\definecolor{DBrown}{rgb}{.5,.34,.16}
\definecolor{Black}{rgb}{0,0,0}
\newtheorem{lemma}{Lemma}
\newtheorem{theorem}{Theorem}
\newenvironment{proof}[1][Proof]{\begin{trivlist}
\item[\hskip \labelsep {\bfseries #1}]}{\end{trivlist}}
\newenvironment{definition}[1][Definition]{\begin{trivlist}
\item[\hskip \labelsep {\bfseries #1}]}{\end{trivlist}}
\newcommand{\qed}{\nobreak \ifvmode \relax \else
      \ifdim\lastskip<1.5em \hskip-\lastskip
      \hskip1.5em plus0em minus0.5em \fi \nobreak
      \vrule height0.75em width0.5em depth0.25em\fi}
\newcommand{\enp} {\hfill \rule{2.2mm}{2.6mm}}
\newcommand{\la}{\lambda}
\newcommand{\ep}{\epsilon}
\newcommand{\mc}{\mathcal}
\newcommand{\mrm}{\mathrm}
\newcommand{\M}{\mathrm{M}}
\newcommand{\PM}{\mathrm{PM}}
\title{On the exactness of the cavity method for Weighted $\mrm{b}$-Matchings on Arbitrary Graphs and its Relation to Linear Programs}
\author{Mohsen Bayati\thanks{Microsoft Research;
\{mohsenb~,~borgs~,~jchayes\}@microsoft.com}\and
Christian Borgs$^*$ \and Jennifer Chayes$^*$ \and Riccardo
Zecchina\thanks{Politecnico Di Torino; riccardo.zecchina@polito.it}}
\date{}
\begin{document}

 \maketitle

\begin{abstract}
We consider the general problem of finding
the minimum weight $\mrm{b}$-matching on arbitrary graphs.
We prove that, whenever the linear
programming relaxation of the problem has no fractional solutions,
then the cavity  or belief propagation equations converge to the
correct solution both for synchronous and asynchronous updating.

\end{abstract}

Motivated by the cavity method , very fast distributed heuristic algorithms have recently been developed for the solution
of random constraint satisfaction problems.
In some cases, namely in the replica symmetric  scheme, the algorithms generated by the cavity method are
exactly of the form of a classic belief propagation {(max-product or min-sum)},
i.e. a message-passing algorithm for efficiently
computing marginal probabilities or finding the assignment with highest probability of a joint discrete probability
distribution defined on a graph. The belief propagation (BP) algorithm converges to
a correct solution if the associated graph is a tree, and
may be also {a good heuristic} for some graphs with cycles.
In other cases, e.g.  {when the space of solutions clusters into many subsets}, the cavity method may lead to a more involved survey propagation (SP) algorithm \cite{MeZ02,BMZ05}, in which some form of long range correlation among variables is included in the formalism.

In this paper, we study the problem of finding the minimum weight $\mrm{b}$-matchings in \emph{arbitrary} graphs
via the min-sum version of the cavity/ BP algorithm.

Let $G=(V,E)$ be an undirected graph with edge weights $w_{ij}$ for
each edge $\{i,j\}\in E$ and node capacities $b_i$ for each node
$i\in V$. The iterative message-passing algorithm based on synchronous BP for solving the weighted perfect
$\mrm{b}$-matching problem
is the following \emph{simple} procedure:
At each time, every vertex of the graph sends (real {valued}) messages  to each of its neighbors.
The message transmitted at time $t$ from $i$ to $j$ is $w_{ij}$ minus the $b_i^{th}$ minimum of
the messages previously received by
$i$ at time $t-1$ from all of its neighbors except $j$. At the end of each
iteration, every vertex $i$ selects $b_i$ of its adjacent edges that
correspond to the $b_i$ smallest received messages.
This procedure can be derived as the zero temperature limit of the cavity equations for the minimum b-matchning problem. In what follows we will use as synonymous cavity equation, belief propagation (BP) or min sum equations.

We will show the following result:  For {arbitrary} graphs $G$,
and all sets of weights {$\{w_{ij}\}$}, after $O(n)$ iterations, the set of selected edges converges to the correct solution,
i.e., to the minimum weight perfect $\mrm{b}$-matching of $G$, provided
that the Linear Programming (LP) relaxation of the problem  has no fractional solutions.
Additionally we introduce a new construction, a \emph{generalized computation tree},
which allows us to analyze the more complicated case of BP with an asynchronous updating scheme, and prove
convergence and correctness of it when each edge of the graph transmits at least $\theta(n)$ messages.

Our proof gives new insight of the often-noted but poorly understood connection between the cavity method (or BP)  and LP through the dual of the LP relaxation.

We also modify our BP algorithm and its analysis to include the problem of finding the \emph{non-perfect}
weighted $\mrm{b}$-matchings. Independent work on this aspect  can be found in \cite{SMW07}.  {
Previous exact results on \emph{bipartite graphs} was obtained in \cite{BSS05} for $\mrm{b}=1$ and then was extended to all $\mrm{b}$ in \cite{HuJ07}.}

The weighted $\mrm{b}$-matching problem is an important
problem in combinatorial optimization.  For extensive surveys see \cite{Ger95} and \cite{Pul95}.
In physics, {the} study of the random $1$-matching problem by the cavity method goes back to the work
of M\`{e}zard and Parisi \cite{MeP86} who made a celebrated conjecture for the expected
optimum weight ($\pi^2/6$) that was proven to be exact a decade later by Aldous \cite{Ald01}.
 {BP algorithms have been the subject of extensive study in several communities (see \cite{BSS05} and \cite{BBCZ07-arXiv} for a detailed survey of rigorous results about BP).}
Recent works have also suggested a connection between the BP algorithm
and linear programming (LP) in particular problems. A relationship
between iterative decoding of channel codes and LP decoding was
studied in \cite{FWK05}, \cite{VoK04}, \cite{VoK05}. Other relationships were noted in the
context of BP algorithms with convex free energies
\cite{WJW05},  \cite{WYM07}, and in the case of BP algorithms
for resource allocations \cite{MoV07J}. For weighted $1$-matchings, the connection was studied \cite{BSS06}
in the context of similarities between BP equations and the primal-dual auction algorithm of Bertsekas \cite{Ber88}.
Finally, we note that the BP equations for solving the weighted matching problem
which we use in this paper have been previously studied in \cite{BSS06}, \cite{HuJ07}.
These equations are also similar (though not identical) to the cavity equations for weighted
matching problems and traveling salesman problems which are found in the statistical physics literature,
see e.g.   {\cite{MeP86}, \cite{Ald01}, \cite{GNS05}, and } \cite{ZdM06}.

 The main contributions of our results can be summarized as follows:
\begin{enumerate}
\item The properties of the BP equations for the weighted matching was first used in \cite{BSS05}
and its correctness and convergence was shown for bipartite graphs with
unique optimum solution. The same technique was used in \cite{HuJ07} to extend the result to
$\mrm{b}$-matchings in bipartite graphs. But this technique fails for
graphs containing cycles with odd length. In order to bypass this difficulty
we use a completely different tool, complementary slackness conditions of
the LP relaxation and its dual, which is independent of the graph structure.

\item The connection between LP and the cavity method (or BP) is particularly important given that LP is a widely used technique for optimization.
We show both convergence and correctness of the BP algorithm when LP relaxation has no
fractional solutions, thus establishing a solid link between the two methods.
 {In some} recent  work, \cite{WYM07}, which generalizes methods of \cite{WJW05},
the connection of the BP algorithm and LP relaxation is studied
{in} the converged case of the BP. The authors also study interesting variations
of the BP which have convex free energies.

\item \emph{The asynchronous BP}, which includes the synchronous
version as a special case, has been a more popular version for practical purposes.
But, due to its more complicated structure,
it has not been the subject of much rigorous study. Here we provide  correctness and convergence proof of asynchronous
BP for a combinatorial optimization problem based on the construction of a generalized computation tree, which can be used for the analysis of the both convergence and correctness of  asynchronous
message-passing algorithms as the cavity method.
\end{enumerate}

Consider an undirected simple graph $G=(V,E)$, with vertices $V =
\{1,\ldots,n\}$, and edges $E$. Let each edge $\{i,j\}$ have
weight $w_{ij}\in \mathbb{R}$. Denote  {the} set of neighbors of each
vertex $i$ in $G$ by $N(i)$. Let $\mrm{b}=(b_1,\ldots,b_n)$ be a
sequence of positive integers such that $b_i\leq deg_G(i)$. A
subgraph $M$ of $G$ is called a \emph{$\mrm{b}$-matching} (\emph{perfect $\mrm{b}$-matching})
 {if the} degree of each vertex $i$ in $M$ is at most $b_i$ (equal to $b_i$).
Denote the set of $\mrm{b}$-matchings (perfect $\mrm{b}$-matchings)
of $G$ by $\M_G(\mrm{b})$ ($\PM_G(\mrm{b})$), and assume that it is non-empty.
 {Clearly} $\PM_G(\mrm{b})\subset \M_G(\mrm{b})$.

The weight of {a (perfect or non-perfect)} $\mrm{b}$-matching
$M$, denoted by $W_M$, is defined by
$W_{M}=\sum_{ij}w_{ij}1_{\{i,j\}\in M}$. In the next two
sections, we will restrict ourselves to the case of \emph{\it
perfect} $\mrm{b}$-matchings. The analysis is extended to
(possibly non-perfect) $\mrm{b}$-matchings in \cite{BBCZ07-arXiv}. The minimum
weight perfect $\mrm{b}$-Matching ($\mrm{b}$-MWPM), $M^*$, is
defined by $M^*=\textrm{argmin}_{M\in \PM_G(\mrm{b})}\ W_{M}$.
The goal of this paper is to find $M^*$ via a min-sum belief
propagation algorithm. Throughout the paper, we will assume
that $M^*$ is unique.

{\bf Linear Programming Relaxation.}
Assigning variables $x_{ij}\in\{0,1\}$ to the edges in
$E$, we can express the   weighted perfect $\mrm{b}$-matching problem
as the problem of finding a vector
$\mrm{x}\in\{0,1\}^{|E|}$ that minimizes the total weight
$\sum_{ij\in E}x_{ij}w_{ij}$, subject to the constraints
$\sum_{{j\in N(i)}}x_{ij}=b_i$ for all $i\in V$.  Relaxing the constraint
that $x_{ij}$ is integer, this leads to the following linear program and its dual:
\begin{equation}
\begin{array}{rcclcrccl}
&&&&&&&&\\
\textrm{min }&&\sum_{\{i,j\}\in E}x_{ij}w_{ij}&&|&\textrm{max }&&\sum_{i=1}^n b_iy_i-\sum_{{\{i,j\}\in E}} \la_{ij}&\\
\textrm{subject to}&&&&|&\textrm{subject to}&&&\\
&&\sum_{j\in N(i)} x_{ij} = b_i&\forall~~ i&|&&&w_{ij}+\la_{ij}\geq y_i + y_j&\forall~~ \{i,j\}\in E\\
&&0\leq x_{ij} \leq 1&\forall~\{i,j\}\in E&|&&&\la_{ij}\geq0&\forall~\{i,j\}\in E\\
&&&&|&&&&\\
&&&&|&&&&\\
&&\textrm{Primal LP}&&|&&&\textrm{Dual LP}&\\
\end{array}
\label{eq:LP-Relax}
\end{equation}
We say the LP relaxation \eqref{eq:LP-Relax} has \emph{no fractional solution} if,
every optimal solution $x$ of LP satisfies $x\in\{0,1\}^{|E|}$.
Note that absence of fractional solutions implies uniqueness of
integer solutions,
since any convex combination of two integer solutions is a solution to the
LP as well. We want to show that the
BP algorithm for our problem converges to the
correct solution, provided the LP relaxation (\ref{eq:LP-Relax}) has no fractional solution.

{\bf Complementary Slackness Conditions.} Complementary slackness
for  the LP and its dual state that the variables $\mrm{x}^{{*}}=(x_{ij}^*)$ and
$\mrm{y}^*=(y_i^*),~\mrm{\la}^{{*}}=(\la_{ij}^*)$ are optimum solutions to the
LP relaxation and its dual (\ref{eq:LP-Relax}),
respectively, if and only if for all edges $\{i,j\}$ of $G$ both $x_{ij}^*(w_{ij}+\la_{ij}^*-y_i^*-y_j^*)=0$.
and $(x_{ij}^*-1)\la_{ij}^*=0$ hold.
Using the fact that the {LP has no fractional solution}, one can deduce the following modified
complementary slackness conditions: For all $\{i,j\}\in M^*;$ $w_{ij}+\la_{ij}^*=y_i^*+y_j^*$ and for all $\{i,j\}\notin M^*;$ $\la_{ij}^*=0$.

By these conditions and the fact that  $\la_{ij}^*\geq 0$, we have {that}
$w_{ij}\leq y_i^*+y_j^*$ for all $\{i,j\}\in M^*$, and $w_{ij}\geq y_i^*+y_j^*$ for all $\{i,j\}\notin M^*$. However, as the counterexample given in \cite{BBCZ07-arXiv} shows, it is in general not true that these inequalities are strict even when the LP has no fractional solution.
Let $S$ be the set of those edges in $G$ for which $|w_{ij}-y_i^*-y_j^*|>0$. We will assume the minimum gap is
$\epsilon$. i.e. $\epsilon = \min_{\{i,j\}\in S}~|w_{ij} - y_i^*-y_j^*|>0$. Throughout this paper we assume that there exist an edge in $G$ for which the strict inequality $|w_{ij}-y_i^*-y_j^*|>0$ holds and therefore $\epsilon>0$ is well defined. The other cases,
where for each $\{i,j\}\in E$ the equality $w_{ij}=y_i^*+y_j^*$ holds,
happens only for special cases are discussed in \cite{BBCZ07-arXiv}  {and can be treated similarly.}
Let also \begin{equation}
\label{eq:y-bd} L = \max_{1\leq i\leq n}~|y_i^*| .
\end{equation}

{\bf Algorithm and Main Result.}
The following algorithm is  a synchronous implementation of BP for
finding the minimum weight perfect $\mrm{b}$-matching ($\mrm{b}$-MWPM). The main intuition behind this
algorithm (and, indeed, all BP algorithms) is that each vertex of the graph
assumes the graph has no cycles, and makes the best (greedy) decision
based on  {this} assumption.
Before applying the BP algorithm, we remove all \emph{trivial} vertices from the graph.
A vertex $i$ is called trivial if $deg_G(i)=b_i$. This is because all of the edges adjacent to $i$ should be in every perfect $\mrm{b}$-matching. Therefore the graph can be simplified by removal of all trivial vertices and their adjacent edges.
\vspace{.1in}
\hrule
\noindent{\bf Algorithm Sync-BP.}
\vspace{2mm}
\hrule
\begin{itemize}
\item[(1)] At times $t=0,1,\ldots$, each vertex sends  {real-valued} messages
to each of its neighbors.  {The} message of $i$ to
$j$ at time $t$ is denoted by $m_{i\to j}(t)$.

\item[(2)] Messages are initialized by $m_{i\to j}(0)
= w_{ij}$ for all $\{i,j\}\in E$ (the messages can actually be initialized by any arbitrary values \cite{{BBCZ07-arXiv}})

\item[(3)] For $t \geq 1$, messages in iteration $t$ are obtained from
messages  {in} iteration $t-1$ recursively as follows:
\begin{eqnarray}
\forall~\{i,j\}\in E:~~~~m_{i\to j}(t) & = & w_{ij} -
b_i^{th}\textrm{-min}
_{\ell\in N(i)\backslash\{j\}}\bigg[
m_{\ell\to i}(t-1)\bigg] \label{l:recursesim}
\end{eqnarray}
where $k^{th}$-min$[A]$  {denotes} the $k^{th}$ minimum\footnote{Note that the $b_i^{th}\textrm{-min}_{\ell\in N(i)\backslash\{j\}}$ is well defined since we assumed that all trivial vertices are removed and thus there are at least $b_i+1$ elements in the set $N(i)$ for each $i$.} of set A.

\item[(4)] The estimated $\mrm{b}$-MWPM at the end of iteration $t$ is $M(t)=\cup_{i=1}^nE_i(t)$ where $E_i(t)=\big\{\{i,j_1\},\ldots,\{i,j_{b_i}\}\big\}$  is such
that $N(i)=\{j_1,j_2,\ldots,j_{deg_G(i)}\}$ and $m_{{j_1}\to i}(t)
\leq m_{{j_2}\to i}(t)\cdots\leq m_{{j_{deg_G(i)}}\to i}(t)$.
i.e.,  among all $i$'s neighbors, choose edges to the $b_i$
neighbors that transfer
the smallest incoming messages to $i$.

\item[(5)] Repeat (3)-(4)  {until} $M(t)$
    converges\footnote{{ The subgraph $M(t)$ is not
    necessarily a perfect $\mrm{b}$-matching of $G$ but we
    will show that after $O(n)$ iterations it will be the
    minimum weight perfect $\mrm{b}$-matching.}}.
\end{itemize}
\vspace{2mm}
\hrule
In Lemma \ref{cor:bp-solves-tree}, we will show the main intuition behind
the equation \eqref{l:recursesim} and how it is derived. But we note that one can also
use the graphical model representations of \cite{BSS05}, \cite{HuJ07},
\cite{San07} to obtain the standard BP equations for this problem,
which, after some algebraic calculations, yield the recursive
equation \eqref{l:recursesim}.

{The main result of the paper  says that the above algorithm, which is designed for graphs with no
cycle (i.e., for trees), works correctly for a much larger family of graphs
including those with many short cycles.}

\begin{theorem}\label{thm:main}
Assume that {the LP relaxation (\ref{eq:LP-Relax})
has no fractional solution}.  {Then} the algorithm Sync-BP converges to $M^*$ after
at most $\lceil \frac{2nL}{\epsilon}\rceil$ iterations.
\end{theorem}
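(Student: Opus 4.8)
The plan is to analyze the behavior of the BP messages $m_{i\to j}(t)$ by relating them to the dual certificate $(\mathbf{y}^*, \boldsymbol{\lambda}^*)$ guaranteed by the no-fractional-solution hypothesis, and to show that the sign pattern of $m_{i\to j}(t) - y_i^*$ eventually stabilizes and correctly identifies $M^*$. The crucial first step is to understand the recursion \eqref{l:recursesim} on a tree, which is the content of the forthcoming Lemma~\ref{cor:bp-solves-tree}: on a finite tree the BP messages equal exact min-cost quantities in the associated subtree, so BP solves the problem exactly. The standard device for extending this to a general graph $G$ with cycles is the \emph{computation tree}: the tree $T_i(t)$ of depth $t$ rooted at vertex $i$, obtained by unrolling the message-passing recursion (the universal cover of $G$ truncated at depth $t$). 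The key point is that the message $m_{i\to j}(t)$ computed by running BP on $G$ for $t$ steps coincides with the exact message computed at the root of this computation tree. Hence $M(t)$ at the root is determined by an \emph{exact} minimum-weight $\mrm{b}$-matching computation on $T_i(t)$, and the whole problem reduces to comparing the optimal $\mrm{b}$-matching on the computation tree with the restriction of $M^*$ to that tree.

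First I would fix a vertex $r$ (the root) and an edge $\{r,j\}$, and build the computation tree $T_r(t)$. I would lift the dual variables: each copy of a vertex $i$ in $T_r(t)$ inherits $y_i^*$, and each copy of an edge inherits $\lambda_{ij}^*$ and $w_{ij}$. Because $M^*$ satisfies the modified complementary slackness conditions ($w_{ij}+\la_{ij}^* = y_i^*+y_j^*$ on $M^*$, and $\la_{ij}^*=0$ with $w_{ij}\ge y_i^*+y_j^*$ off $M^*$), the lift of $M^*$ to the computation tree is a natural candidate matching on $T_r(t)$, and the lifted duals form a \emph{feasible} dual solution for the $\mrm{b}$-matching LP on the tree. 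The central argument is an exchange/alternating-path argument: suppose the optimal matching on $T_r(t)$ disagrees with the lift of $M^*$ at the root edge $\{r,j\}$. Then their symmetric difference contains an alternating path $P$ starting at the root $r$; since $T_r(t)$ is a tree of depth $t$, if $t$ is large this path can be taken to run all the way down to depth $t$ (it cannot close into a cycle, and degree constraints at each internal vertex force it to continue). I would then bound the weight difference $W_{M_{\mathrm{tree}}} - W_{M^*\text{-lift}}$ along $P$ using the dual gap: each edge of $P$ that lies in $S$ contributes at least $\epsilon$ in the right direction, while the contributions telescope through the $y_i^*$ along the path, leaving only the two endpoint terms bounded by $2L$ via \eqref{eq:y-bd}. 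Comparing these forces a contradiction once $t > 2nL/\epsilon$ (roughly, an alternating path of length $\geq 2nL/\epsilon$ must accumulate gap exceeding the boundary term $2L$), which shows the tree optimum must agree with $M^*$ at the root, i.e.\ $M(t)=M^*$.

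The main obstacle, and the place where the $\mrm{b}=1$ bipartite proofs of \cite{BSS05,HuJ07} break down, is controlling the alternating path in the presence of odd cycles and general capacities $b_i$. The symmetric difference of two $\mrm{b}$-matchings on a tree is a union of paths and (in $G$) possibly odd structures, but on the computation tree it is genuinely acyclic, and the degree-balance at each internal node (both matchings are perfect away from the leaves) is what guarantees the disagreement propagates from the root down to the leaves rather than terminating early. The delicate points are: (i) verifying that the lifted $M^*$ really is optimal-up-to-boundary on the tree, which is exactly where the dual-feasibility from complementary slackness and the telescoping of $y_i^*$ is used, rather than any structural property of $G$; and (ii) converting ``the alternating path reaches depth $t$'' into the quantitative iteration bound $\lceil 2nL/\epsilon\rceil$, by arguing the path must traverse at least $t$ levels while its net dual cost, after telescoping, is sandwiched between $t\epsilon$-type growth and the fixed boundary bound $2L$. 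I expect step (ii)---the precise accounting that trades path length against the gap $\epsilon$ and the bound $L$ to pin down the constant $2nL/\epsilon$---to require the most care, while the conceptual heart is the observation that complementary slackness, being graph-structure-independent, lets the tree argument go through even when $G$ has odd cycles.
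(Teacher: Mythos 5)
Your plan follows essentially the same route as the paper: reduce to an exact minimum-weight perfect tree-$\mrm{b}$-matching computation on the computation tree (the paper's Lemma~\ref{cor:bp-solves-tree}), assume the tree optimum $\mc{N}^*$ disagrees with the lift $\mc{M}^*$ of $M^*$ at the root, extract an alternating path, swap along it to get a cheaper perfect tree-$\mrm{b}$-matching, and derive a contradiction by telescoping the dual values $y_i^*$ along the path so that only boundary terms of size at most $2L$ survive against an accumulation of $\epsilon$-gaps. Two points, one minor and one substantive.

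The minor point: the alternating path must be \emph{two-sided}, extending from the root through one child in $\mc{M}^*\setminus\mc{N}^*$ and through another in $\mc{N}^*\setminus\mc{M}^*$ down to depth $t$ on both sides (the paper's $P_t$, of length $2t$ with vertices indexed $-t$ to $t$). A one-sided path ``starting at the root,'' as you phrase it, would change the root's degree under the swap and the result would not be a perfect tree-$\mrm{b}$-matching; only the two deepest endpoints may have their degrees altered, since leaves carry no constraint. The substantive gap is in your step (ii). After telescoping, the weight difference is $(k_1+k_2)\epsilon - 2L$, where $k_1+k_2$ is the number of edges of the path lying in $S=\{\{i,j\}: |w_{ij}-y_i^*-y_j^*|>0\}$. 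You assert that a long path ``must accumulate gap exceeding $2L$,'' but nothing in dual feasibility or complementary slackness forces $k_1+k_2$ to grow with the path length: a priori the entire path could consist of tight edges ($w_{ij}=y_i^*+y_j^*$), and the paper explicitly warns that tight edges off $M^*$ can exist even when the LP has no fractional solution. The missing ingredient is the paper's Lemma~\ref{lem:path-2n}: under the no-fractional-solution hypothesis, \emph{every} alternating path of length at least $2n$ (projected to $G$) contains an edge of $S$. Applied to consecutive segments of $P_t$ this gives $k_1+k_2\geq 2t/(2n)=t/n>2L/\epsilon$ for $t>2nL/\epsilon$, which is exactly what closes the argument. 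This lemma is the second, less visible place where the no-fractional-solution hypothesis does real work (roughly: a long alternating walk of tight edges yields an alternating cycle of tight edges, from which one builds an alternative, possibly fractional, LP optimum), and your sketch does not supply it or a substitute.
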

{If the LP relaxation \eqref{eq:LP-Relax} has a fractional solution
whose cost is strictly less than $W_{M^*}$, then \cite{San07}, \cite{SMW07} have shown
for the case of 1-matching that BP does not converge to $M^*$.
It is straightforward to generalize this to
perfect $\mrm{b}$-matching as well.
But for the case in which the LP relaxation has
a fractional solution whose cost is equal to $W_{M^*}$, BP fails
in general. This is because the $b_i^{th}$ minimum in
equation \eqref{l:recursesim} is not unique, and one needs
an oracle to make the right decision. If such an oracle {exists},
then BP converges to $M^*$.}

In what follows we first  display  the connection between the Sync-BP equations and the so-called computation
tree. Next we discuss the how  the complementary slackness conditions is related to alternating paths in the graph $G$. These results are eventually used  to prove that, when the LP relaxation has no fractional solutions, then
solutions on the computation tree are the same as the solutions on the original graph $G$.

 {
{\bf Analysis of the Synchronous BP via Computation Tree.}} The main idea behind the algorithm Sync-BP is that it assumes the graph $G$ has no
cycle. In other words, it finds the $\mrm{b}$-MWPM of a graph $G'$
that has the same local structure as $G$ but no cycles. The precise definition of the computational tree  for Sync-BP goes as follows.

{\bf Computation Tree.} For any $i\in V$, let $T_{i}^t$ be the $t$-level computation
tree corresponding to $i$, defined as follows: $T_{i}^{t}$ is a weighted tree of height $t+1$,
{rooted at $i$}. All tree-nodes have labels from the set $\{1,\ldots,n\}$ according to the following
recursive rules: (a) {The root} has label $i$, (b) The {set of labels of the} $deg_G(i) $ children of the root is
equal to $N(i)$, and (c) If $s$ is a non-leaf node whose parent has label $r$,
then the set of labels of its children is $N(s)\backslash\{r\}$.
$T_{i}^{t}$ is often called the {\em
{unwrapped} tree} at node $i$. The computation tree is well known technique for analyzing algorithms and
constructed by replicating the local connectivity of the original
graph. The messages received by node
$ i$ in the belief propagation algorithm after $t$ iterations in graph
$G$ are equivalent to those that would have been received by the root $ i$
in the computation tree, if the messages were passed up along the
tree from the leaves to the root.
A subtree $\mc{M}$ of edges in the computation tree $T_{
i}^{t}$ is called a perfect \emph{tree-$\mrm{b}$-matching} if
for each \emph{non-leaf} vertex with label $i$ we have
$deg_{\mc{M}}( i)=b_i$. Now denote the minimum weight perfect
tree-$\mrm{b}$-matching ($\mrm{b}$-TMWPM) of the computation
tree $T_{i}^{t}$ by $\mc{N}^*(T_{ i}^{t})$. The following lemma shows that
Sync-BP can be seen as a dynamic programming procedure that
finds the minimum weight perfect tree-$\mrm{b}$-matching over
the computation tree. Figure \ref{fig:one} shows a graph $G$
and one of its corresponding computation tree.

\begin{figure}
\centering
      \includegraphics[scale=0.4]{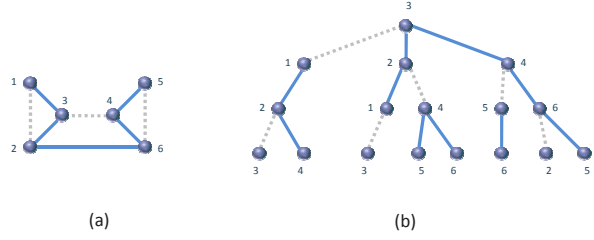}
      \caption{ Part (a) shows a graph $G$ where dashed and gray edges represent a $1$-matching. Part (b) shows the computation tree $T_3^2$ corresponding to $G$ where the set of dashed and gray edges form a $1$-TMWPM.}
      \label{fig:one}
\end{figure}
\begin{lemma}\label{cor:bp-solves-tree}
The algorithm Sync-BP solves the $\mrm{b}$-TMWPM problem on the computation
tree. In particular, for each vertex $ i$ of $G$, the set of
$E_i(t)$ {which} was chosen at the end of iteration $t$ by
Sync-BP is exactly the set of $b_i$ edges {which} are attached to the
root in $\mrm{b}$-TMWPM of $T_{ i}^{t}$.
\end{lemma}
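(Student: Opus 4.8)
The plan is to prove the lemma by induction on the iteration number $t$, showing that each message $m_{j\to i}(t)$ is exactly the incremental cost, computed by dynamic programming over the subtree below $j$, of forcing the parent edge $\{i,j\}$ into versus out of the matching. Fix an edge $\{i,j\}$ and let $T_{j\to i}^{t}$ denote the subtree obtained by unwrapping $G$ for $t$ further levels below $j$, with $j$ viewed as a child of $i$ (its nodes and children are generated by the same rule (c) used to build the computation tree), so that for $t=0$ it is the single leaf $j$. For this subtree define two quantities: $c_0(j,t)$, the minimum weight of a perfect tree-$\mrm{b}$-matching on $T_{j\to i}^{t}$ in which $j$ selects exactly $b_j$ of its children-edges (the parent edge $\{i,j\}$ excluded), and $c_1(j,t)$, the analogous minimum in which $j$ selects exactly $b_j-1$ children-edges together with the parent edge $\{i,j\}$ (so this count includes the weight $w_{ij}$). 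The inductive claim is that $m_{j\to i}(t)=c_1(j,t)-c_0(j,t)$ for every $t\geq 0$.

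For the base case $t=0$ the subtree is the single leaf $j$, which carries no degree constraint, so $c_0(j,0)=0$ and $c_1(j,0)=w_{ij}$, giving $m_{j\to i}(0)=w_{ij}$ in agreement with the initialization in step (2). For the inductive step, let $j$ have children $\ell\in N(j)\stm\{i\}$ and assume $m_{\ell\to j}(t-1)=c_1(\ell,t-1)-c_0(\ell,t-1)$ for each. If $j$ is to select exactly $k$ of its children-edges, the optimal completion chooses, among the children, the $k$ edges with the smallest incremental costs $m_{\ell\to j}(t-1)$; hence the minimum subtree weight with $k$ children selected is $\sum_{\ell}c_0(\ell,t-1)$ plus the sum of the $k$ smallest values $m_{\ell\to j}(t-1)$. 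Taking the difference between the $k=b_j-1$ and $k=b_j$ cases, the common term $\sum_\ell c_0(\ell,t-1)$ cancels and the difference of the two partial sums equals exactly $-\,b_j^{th}\textrm{-min}_{\ell\in N(j)\stm\{i\}}\,m_{\ell\to j}(t-1)$. Adding the weight $w_{ij}$ that distinguishes $c_1$ from $c_0$ yields
\[
m_{j\to i}(t)=c_1(j,t)-c_0(j,t)=w_{ij}-b_j^{th}\textrm{-min}_{\ell\in N(j)\stm\{i\}}\,m_{\ell\to j}(t-1),
\]
which is precisely the recursion \eqref{l:recursesim}, completing the induction.

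Finally I would apply the same selection argument at the root $i$ of $T_i^t$. A perfect tree-$\mrm{b}$-matching uses exactly $b_i$ edges at the root, and by the identity above its total weight is $\sum_{j\in N(i)}c_0(j,t)$ plus the sum of the incremental costs $m_{j\to i}(t)$ over the chosen children; this is minimized by taking the $b_i$ children with the smallest incoming messages $m_{j\to i}(t)$. That set is exactly $E_i(t)$ from step (4), so $E_i(t)$ is the set of root-edges of $\mc{N}^*(T_i^t)$, which is the assertion of the lemma.

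The main obstacle is bookkeeping rather than any deep idea: one must align the iteration index $t$ with the unwrapping depth, treat the leaves (which carry no degree constraint) correctly so that the base case reproduces the initialization, and verify that the minimization decouples across the children of each node — this decoupling is where the acyclicity of $T_i^t$ is essential. The one algebraic point to state cleanly is that the difference between the sum of the $b_j-1$ smallest and the sum of the $b_j$ smallest values is the negative of the $b_j^{th}$ smallest value, which is exactly what converts the dynamic program into the $b_j^{th}$-min appearing in the message update.
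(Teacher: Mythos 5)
Your proof is correct and is precisely the argument the paper intends: the paper omits a formal proof of this lemma (deferring details to the longer version \cite{BBCZ07-arXiv}) but explicitly describes Sync-BP as a dynamic program passing messages up the computation tree, and your induction showing $m_{j\to i}(t)=c_1(j,t)-c_0(j,t)$ is the standard way to make that precise. The bookkeeping points you flag (leaves carry no degree constraint, decoupling of the minimization over disjoint child subtrees, and the identity that the difference of the two partial sums is minus the $b_j^{th}$ minimum) are exactly the right ones.
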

Lemma \ref{cor:bp-solves-tree} characterizes the estimated $\mrm{b}$-MWPM,
 $M(t)$, and will be used later in the proof of the main result.

The following technical Lemma is also crucial for the proof of equivalence between BP and LP.   It  connects the complementary slackness conditions   to paths on the graph $G$ and on the computation tree. This lemma  provides the connection between the absence of fractional
solutions and the correctness of BP.

\begin{definition}\label{def:alt-path}
A path $P=( {i_1}, {i_2},\ldots, {i_k})$ in $G$ is called
\emph{alternating path} if it has the following two properties: \textbf{(i)} There exist a partition {of} edges of
$P$ into two sets $A,B$ such that either $(A\subset M^*~,~B\cap M^*=\emptyset)$ or $(A\cap M^*=\emptyset~,~B\subset M^*)$.
Moreover $A$ ($B$) consists of all odd (even) edges; i.e. $A=\{({i_1}, {i_2}), ( {i_3}, {i_4}),\ldots\}$
($B=\{( {i_2}, {i_3}), ( {i_4}, {i_5}),\ldots\}$), and \textbf{(ii)} The path $P$ might intersect itself or even
repeat its own edges but no edge is repeated immediately. That is, for any $1\leq r\leq k-2:~ i_r\neq  i_{r+1}$ and $ i_r\neq  i_{r+2}$.
$P$ is called an \emph{alternating cycle} if $ {i_1}= {i_k}$.
\end{definition}
\begin{lemma}\label{lem:path-2n}
Assume that the LP relaxation (\ref{eq:LP-Relax}) has no fractional solution. Then for any alternating path $P$ of length at least $2n$, there
exists an edge $\{i,j\}\in P$ such that
{the} inequality $|w_{ij} -
y_i^*-y_j^*|>0$ holds. That is, $P\cap S\neq\emptyset$.
\end{lemma}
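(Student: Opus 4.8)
The plan is to argue by contraposition: I assume that $P$ is an alternating path of length at least $2n$ all of whose edges lie outside $S$, i.e. every edge $\{i,j\}\in P$ is \emph{tight} in the sense $w_{ij}=y_i^*+y_j^*$, and from this I derive a \emph{fractional} optimal solution of the primal LP in \eqref{eq:LP-Relax}, contradicting the hypothesis that it has no fractional solution. Under that hypothesis the primal optimum is attained only at integral points, so the indicator vector $\mrm{x}^*$ of $M^*$ is itself an optimal point; it then suffices to exhibit a second optimal solution that is fractional.

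First I would extract an alternating closed walk (an alternating cycle, in the sense of Definition \ref{def:alt-path}) from the long path by a pigeonhole argument. To each position $p$ of $P=(i_1,\ldots,i_k)$ I attach the augmented label $(i_p,\,p\bmod 2)$. There are at most $2n$ such labels while $P$ has $k\geq 2n+1$ vertices, so two positions $p<q$ satisfy $i_p=i_q$ and $p\equiv q\pmod 2$. The sub-walk $C=(i_p,i_{p+1},\ldots,i_q)$ is then closed and of even length, and the parity relation $p\equiv q\pmod 2$ forces the first and last edges of $C$ to lie in opposite classes of the partition $A,B$, so that the $M^*$/non-$M^*$ alternation of $P$ closes up consistently around the loop; condition (ii) of Definition \ref{def:alt-path} is inherited from $P$. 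Thus $C$ is a genuine alternating cycle, every edge of which is tight.

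Next I would perturb $\mrm{x}^*$ along $C$. Counting occurrences with multiplicity, let $z_e$ be the number of times $e$ occurs in $C$ as a non-$M^*$ edge minus the number of times it occurs as an $M^*$ edge; since each edge of $G$ is either in $M^*$ or not, $z_e>0$ on the non-$M^*$ edges of $C$ and $z_e<0$ on the $M^*$ edges, so $z\neq 0$ and is supported exactly on the edges of $C$. The two key computations are: \textbf{(a)} \emph{feasibility} --- at each vertex $v$, every passage of the alternating walk through $v$ meets one incident $M^*$-edge and one incident non-$M^*$-edge, whence $\sum_{e\ni v}z_e=0$, so $\sum_{j\in N(v)}(x^*_{vj}+\delta z_{vj})=b_v$ is preserved and, for $\delta>0$ small, $0\leq x^*_e+\delta z_e\leq 1$ on every edge; and \textbf{(b)} \emph{optimality} --- using tightness $w_e=y_i^*+y_j^*$ on $C$ and regrouping by vertices, the cost change is $\delta\sum_e w_e z_e=\delta\sum_v y_v^*\big(\sum_{e\ni v}z_e\big)=0$. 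Hence $\mrm{x}^*+\delta z$ is feasible with the same cost as $\mrm{x}^*$, so it is optimal, yet it is fractional on the edges of $C$ (there $x^*_e\in\{0,1\}$ while $z_e\neq 0$), contradicting the absence of fractional solutions.

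I expect the main obstacle to be the bookkeeping in the extraction step: because an alternating path in the sense of Definition \ref{def:alt-path} may revisit vertices and even repeat edges, I must ensure that the sub-walk $C$ is not merely closed but \emph{alternating as a cycle}, which is precisely what the parity tag in the augmented label secures. The companion subtlety is that, exactly because $C$ need not be a simple cycle, I cannot produce a second integer matching via the symmetric difference $M^*\triangle C$; this is why I work with the fractional perturbation $\mrm{x}^*+\delta z$, which converts the possible non-simplicity of $C$ into a harmless multiplicity in the vector $z$.
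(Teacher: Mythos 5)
Your argument is correct: the pigeonhole extraction of an even closed alternating walk (with the parity tag guaranteeing that the $M^*$/non-$M^*$ alternation closes up at the joint, and hence that every passage through a vertex pairs one $M^*$-occurrence with one non-$M^*$-occurrence), together with the circulation $z$ supported on that walk, does produce a second, fractional optimal point of the primal LP, since $\sum_{e\ni v}z_e=0$ at every vertex gives feasibility and tightness $w_{ij}=y_i^*+y_j^*$ on the walk kills the cost change. This extended abstract does not actually contain a proof of Lemma \ref{lem:path-2n} (it is deferred to the longer arXiv version), but your route is the intended one; the only step worth making explicit is your opening assertion that the indicator vector of $M^*$ is itself LP-optimal, which indeed follows because under the no-fractional-solution hypothesis every LP optimum is an integral perfect $\mrm{b}$-matching and therefore has weight at least $W_{M^*}$.
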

{\bf Proof of Theorem \ref{thm:main}}
We will prove Theorem \ref{thm:main}, namely that {if}
 {the LP relaxation (\ref{eq:LP-Relax})  has no fractional solution and hence $M^*$ is unique}, then  Sync-BP
converges to the correct $\mrm{b}$-MWPM. We will do this by
showing that if the depth of computation tree is large enough, then for
any vertex $i$, its neighbors in $M^*$ ($\mrm{b}$-MWPM of $G$) are
exactly those children that are selected in $\mc{N}^*(T_{ i}^{t})$
($\mrm{b}$-TMWPM of $T_{ i}^{t}$). Here is the main lemma that
summarizes the above claim:
\begin{lemma}\label{lem:tree=graph}
If the LP relaxation (\ref{eq:LP-Relax}) has no fractional solution, then
for any
vertex~$i$ of $G$ and for any $t> \frac{2nL}{\epsilon}$, the set of
edges that are adjacent to root $i$ in $\mc{N}^*(T_{ i}^t)$ are
exactly those edges that are connected to $ i$ in $M^*$.
\end{lemma}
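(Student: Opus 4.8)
The plan is to argue by contradiction, comparing the tree-optimal matching $\mc{N}^*(T_i^t)$ against the lift of the global optimum $M^*$. First I would define $\Lambda$, the projection of $M^*$ onto $T_i^t$: a tree edge $\{u,v\}$ with endpoint labels $\{s,s'\}$ belongs to $\Lambda$ precisely when $\{s,s'\}\in M^*$. Because $M^*$ is a perfect $\mrm{b}$-matching and the tree-neighbors of any non-leaf node with label $s$ carry exactly the labels $N(s)$, that node has $\Lambda$-degree $\deg_{M^*}(s)=b_s$; hence $\Lambda$ is itself a perfect tree-$\mrm{b}$-matching, and its edges at the root correspond exactly to the $M^*$-neighbors of $i$. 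Thus the assertion of the lemma is equivalent to the statement that $\mc{N}^*$ and $\Lambda$ select the same edges at the root, and I would assume for contradiction that they differ there.

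Next I would build a long alternating path. Consider the symmetric difference $D=\mc{N}^*\triangle\Lambda$. At every non-leaf node (label $s$) both matchings have degree $b_s$, so $D$ has equally many $\mc{N}^*$-edges and $\Lambda$-edges incident to that node; since the ambient graph is a tree, $D$ is a forest in which no non-leaf node has degree one. Starting from a differing edge at the root, I would extend an alternating path in $D$ downward into two distinct subtrees: at each non-leaf node the degree balance supplies an unused edge of the opposite type with which to continue, so each branch is forced to run all the way to a leaf of $T_i^t$. This produces a single alternating path $P$ passing through the root, with both endpoints at leaves and total length at least $2t$. Flipping $\mc{N}^*$ along $P$, i.e. $\mc{N}'=\mc{N}^*\triangle P$, preserves the degree at every internal node of $P$ — in particular at the root, which meets one $\mc{N}^*$-edge and one $\Lambda$-edge of $P$ — and only changes degrees at the two leaf endpoints, where no constraint is imposed; hence $\mc{N}'$ is again a perfect tree-$\mrm{b}$-matching.

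The heart of the argument is a weight estimate through the dual potentials. Writing each weight as $w_{ss'}=c_{ss'}+y_s^*+y_{s'}^*$ with reduced cost $c_{ss'}=w_{ss'}-y_s^*-y_{s'}^*$, I would expand $\Delta:=W_{\mc{N}'}-W_{\mc{N}^*}=\sum_{e\in\Lambda\cap P}w_e-\sum_{e\in\mc{N}^*\cap P}w_e$. Because the edges of $P$ strictly alternate between the two types, every internal node of $P$ contributes its potential once with each sign, so all internal potentials telescope away and only the two leaf endpoints survive, giving $|\,\Delta-\Delta_{\mathrm{red}}\,|\le 2L$, where $\Delta_{\mathrm{red}}$ is the alternating sum of reduced costs. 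The $\Lambda$-edges of $P$ project to $M^*$-edges and the $\mc{N}^*$-edges of $P$ project to non-$M^*$-edges, so by the modified complementary slackness conditions each reduced cost enters $\Delta_{\mathrm{red}}$ with a nonpositive sign; moreover this projection is an alternating path in $G$ in the sense of Definition \ref{def:alt-path}. Cutting $P$ into consecutive blocks of length $2n$, Lemma \ref{lem:path-2n} supplies in each block an edge of $S$, whose reduced cost has absolute value at least $\epsilon$ and the favorable sign. Since $P$ has length at least $2t>4nL/\epsilon$, there are more than $2L/\epsilon$ such disjoint blocks, whence $\Delta_{\mathrm{red}}<-2L$ and therefore $\Delta<0$. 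This contradicts the optimality of $\mc{N}^*(T_i^t)$, so $\mc{N}^*$ and $\Lambda$ must agree at the root, which is the claim.

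The step I expect to be the main obstacle is the construction and validity of the flip: one must route the alternating path \emph{through} the root (rather than ending it there) so that $\mc{N}'$ remains a perfect tree-$\mrm{b}$-matching, and one must verify that the degree balance at non-leaf nodes really forces each branch to descend to a leaf, yielding a path long enough to contain the required number of $S$-edges. Checking that the projected path satisfies both conditions of Definition \ref{def:alt-path} — consistent $M^*$/non-$M^*$ alternation and no immediate edge repetition — is the other place where care is needed before Lemma \ref{lem:path-2n} can be invoked.
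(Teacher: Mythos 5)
Your proposal is correct and follows essentially the same route as the paper: lift $M^*$ to a perfect tree-$\mrm{b}$-matching, assume disagreement at the root, grow an alternating path through the root down to the leaves (length $\geq 2t$), flip $\mc{N}^*$ along it, and bound the weight change using the dual variables $y^*$, complementary slackness, and Lemma \ref{lem:path-2n} to extract more than $2L/\epsilon$ edges of $S$. Your symmetric-difference/degree-balance construction of the path and your reduced-cost telescoping are only cosmetic repackagings of the paper's explicit recursive path $P_t$ and its direct summation of the inequalities $w_{ij}\leq y_i^*+y_j^*$ and $w_{ij}\geq y_i^*+y_j^*$ in Lemma \ref{lem:switch}.
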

The proof of Lemma \ref{lem:tree=graph} is the main technical part of
this work. The high level overview of the underlying argument goes as follows.  Consider the computation tree
($T_{ i}^t$) rooted at vertex $ i$ and look at
$\mc{N}^*(T_{ i}^t)$. We will assume that the claim of the lemma
does not hold. That is, we assume that at the root,
$\mc{N}^*(T_{ i}^t)$ does not choose the same edges as $M^*$-edges
adjacent to $ i$. Then we use the property of perfect tree-$\mrm{b}$-matchings,
{namely}
that each non-leaf vertex $ j$ is connected to exactly $b_j$ of its
neighbors, to construct a new perfect tree-$\mrm{b}$-matching on the computation
tree. This new perfect tree-$\mrm{b}$-matching is going to have less total
weight if the depth of the computation tree is large enough. This last
step uses an alternating path argument which is a highly non-trivial generalization
of the technique of \cite{BSS05} for the case of perfect $1$-matching in
bipartite graphs. For this part we will use the solutions to the dual LP
(\ref{eq:LP-Relax}).

\begin{proof}[Proof of Lemma \ref{lem:tree=graph}]
Let us denote the lifting of a perfect $\mrm{b}$-matching $M^*$ to a perfect tree-$\mrm{b}$-matching
on $T_{ i}^t$ by $\mc{M}^*$. That is, $\mc{M}^*$
consists of all edge of the computation tree with endpoint labels
$ i, j$ such that $\{i,j\}\in M^*$ as an edge in $G$. The goal
is to show that $\mc{N}^*(T_{ i}^t)$ and $\mc{M}^*$ have the same
set of edges at the root of the computation tree. To lighten the notation,
we denote the  $\mrm{b}$-TMWPM of  $T_{ i}^t$  {by} $\mc{N}^*$.

Assume the contrary, that there exist children $ {i_{-1}}, {i_1}$
of root $ i$ such that $\{ i, {i_1}\}\in
\mc{M}^*\backslash\mc{N}^*$ and $\{ i, {i_{-1}}\}\in
\mc{N}^*\backslash\mc{M}^*$. Since both $\mc{M}^*,~\mc{N}^*$ are
perfect tree-$\mrm{b}$-matchings, they have $b_{i_1}$ edges connected to
$i_1$. Therefore there exist a child $ {i_2}$ of $ {i_1}$ such
that $\{ {i_1}, {i_2}\}\in \mc{N}^*\backslash\mc{M}^*$. Similarly
there is a child $ {i_{-2}}$ of $ {i_{-1}}$ such that
$\{ {i_{-1}}, {i_{-2}}\}\in \mc{M}^*\backslash\mc{N}^*$. Therefore
we can construct a set of alternating paths $P_{\ell}, ~\ell \geq 0$,
in the computation tree, that contain edges from $\mc{M}^*$ and
$\mc{N}^*$ alternatively defined as follows. Let $ {i_0} =
\mbox{root}~ i$ and $P_0 = ( {i_0})$ be a single vertex
path. Let $P_1 = ( {i_{-1}},  {i_0},  {i_1})$, $P_2 =
( {i_{-2}},  {i_{-1}},  {i_0},  {i_1},  {i_2})$ and similarly
for $r \geq 1$, define $P_{2r+1}$ and $P_{2r+2}$ recursively as
follows: $P_{2r+1} = ( {i_{-(2r+1)}}, P_{2r},  {i_{2r+1}})$, $
P_{2r+2} = ( {i_{-(2r+2)}}, P_{2r+1},  {i_{2r+2}})$
where $ {i_{-(2r+1)}},~ {i_{2r+1}} $ are nodes at level $2r+1$
such that $\{ {i_{2r}}, {i_{2r+1}}\}\in
\mc{M}^*\backslash\mc{N}^*$ and $\{ {i_{-2r}}, {i_{-(2r+1)}}\}
\in \mc{N}^*\backslash\mc{M}^*$. Similarly
$ {i_{-(2r+2)}},~ {i_{2r+2}} $ are nodes at level $2r+2$ such
that
$\{ {i_{2r+1}}, {i_{2r+2}}\}\in
\mc{N}^*\backslash\mc{M}^*$ and $\{ {i_{-(2r+1)}}, {i_{-(2r+2)}}\}
\in \mc{M}^*\backslash\mc{N}^*$.
Note that, by definition, such paths $P_{\ell}$ for $0\leq \ell \leq
t$ exist since the tree $T_{ i}^t$ has $t+1$ levels and can support
a path of length at most $2t$ as defined above.
Now consider the path $P_t$ of length $2t$. It is an alternating
path on the computation tree with edges from $\mc{M^*}$ and $\mc{N}^*$.
Let us refer to the edges of $\mc{M^*}$ ($\mc{N}^*$) as the
$\mc{M^*}$-edges ($\mc{N}^*$-edges) of $P_t$.
We will now modify the perfect tree-$\mrm{b}$-matching $\mc{N}^*$ by
replacing all $\mc{N}^*$-edges of $P_t$ with their complement in
$P_t$ ($\mc{M^*}$-edges of $P_t$). It is straightforward that this
process produces a new perfect tree-$\mrm{b}$-matching $\mc{N}'$ in
$T_{ i}^t$.

Let us assume, for the moment, the following lemma:
\begin{lemma}\label{lem:switch}
The weight of the perfect tree-$\mrm{b}$-matching $\mc{N}'$ is strictly less
than that of $\mc{N}^*$ on $T_{ i}^t$.
\end{lemma}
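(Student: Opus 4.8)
The plan is to compare $\mc{N}^*$ and $\mc{N}'$ edge-by-edge along $P_t$ and to charge the difference against the dual optimal solution $(\mrm{y}^*,\mrm{\la}^*)$. Since $\mc{N}'$ is obtained from $\mc{N}^*$ by deleting the $\mc{N}^*$-edges of $P_t$ and inserting the $\mc{M}^*$-edges of $P_t$, and these are the only edges that change,
\begin{equation}
W_{\mc{N}^*}-W_{\mc{N}'}=\sum_{e\in\mc{N}^*\text{-edges of }P_t}w_e-\sum_{e\in\mc{M}^*\text{-edges of }P_t}w_e .
\end{equation}
First I would rewrite each $w_e$ as $(w_e-y_a^*-y_b^*)+(y_a^*+y_b^*)$, where $a,b$ are the \emph{labels} of the endpoints of the tree-edge $e$, splitting the right-hand side into a ``slack part'' and a ``potential part''.

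For the slack part, observe that every $\mc{M}^*$-edge of $P_t$ projects (via its labels) to an edge of $M^*$ in $G$, while every $\mc{N}^*$-edge projects to an edge not in $M^*$, because $\mc{M}^*$ is the full lift of $M^*$. Hence the modified complementary slackness conditions give $w_e-y_a^*-y_b^*\le 0$ on the $\mc{M}^*$-edges and $w_e-y_a^*-y_b^*\ge 0$ on the $\mc{N}^*$-edges, so every term of the slack part is nonnegative. For the potential part I would use that $P_t$ is a \emph{path} whose edges strictly alternate between the two types: each interior vertex lies on exactly one $\mc{N}^*$-edge and one $\mc{M}^*$-edge, so its potential cancels and only the two endpoints of $P_t$ survive. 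Thus the potential part reduces to the difference of the potentials at the two endpoints of $P_t$, whose absolute value is at most $2L$ by \eqref{eq:y-bd}. Combining,
\begin{equation}
W_{\mc{N}^*}-W_{\mc{N}'}\ \ge\ \Big(\text{sum of the nonnegative slacks along }P_t\Big)-2L .
\end{equation}

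To make the right-hand side strictly positive I would harvest the strict edges promised by Lemma \ref{lem:path-2n}. Cut $P_t$, which has length $2t$, into $\lfloor t/n\rfloor$ disjoint contiguous subpaths each of length $2n$; each such subpath is itself an alternating path, so by Lemma \ref{lem:path-2n} it contains an edge $e\in S$, i.e. one with $|w_e-y_a^*-y_b^*|\ge\epsilon$. Since these strict edges lie in disjoint blocks they are distinct, and each contributes at least $\epsilon$ to the otherwise nonnegative slack part. Therefore
\begin{equation}
W_{\mc{N}^*}-W_{\mc{N}'}\ \ge\ \Big\lfloor \tfrac{t}{n}\Big\rfloor\,\epsilon-2L ,
\end{equation}
and for $t>\frac{2nL}{\epsilon}$ the right-hand side is positive, which is exactly the hypothesis under which Lemma \ref{lem:tree=graph} invokes this lemma.

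The step I expect to be the main obstacle is justifying that the projection of $P_t$ to $G$ really is an alternating path in the sense of Definition \ref{def:alt-path}, so that Lemma \ref{lem:path-2n} applies to its subpaths. Property \textbf{(i)} follows since the edges alternate between $\mc{M}^*\backslash\mc{N}^*$ and $\mc{N}^*\backslash\mc{M}^*$ and hence between $M^*$ and non-$M^*$ edges; but property \textbf{(ii)} ($i_r\neq i_{r+1}$ and $i_r\neq i_{r+2}$) needs the tree structure: moving down the tree, a child's label cannot equal its grandparent's label because of the construction rule $N(s)\backslash\{r\}$, and at the turning vertex (the root) the two incident edges cannot share a label, since one lies in $M^*$ and the other does not. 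A secondary point to verify is that $\mc{N}'$ is a legitimate perfect tree-$\mrm{b}$-matching: the swap preserves the degree of every interior vertex, and the two endpoints of $P_t$ are leaves of $T_{i}^{t}$, which carry no degree constraint.
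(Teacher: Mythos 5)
Your proof is correct and follows essentially the same route as the paper's: rewrite each edge weight as slack plus dual potentials, use the complementary slackness inequalities to sign the slacks, telescope the potentials down to the two endpoints (bounded by $2L$), and harvest at least one $\epsilon$-gap edge from each length-$2n$ window of $P_t$ via Lemma \ref{lem:path-2n}. The only differences are presentational — you additionally spell out why the projection of $P_t$ is an alternating path in the sense of Definition \ref{def:alt-path} and why $\mc{N}'$ is a valid perfect tree-$\mrm{b}$-matching, points the paper leaves implicit — and your $\lfloor t/n\rfloor$ count of strict edges is at the same level of precision as the paper's $\frac{2t}{2n}$.
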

This completes the proof of Lemma \ref{lem:tree=graph} since Lemma
\ref{lem:switch} shows that $\mc{N}^*$ is not the minimum weight
perfect tree-$\mrm{b}$-matching on $T_{ i}^t$, leading to a
contradiction.\enp
\end{proof}
\begin{proof}[Proof of Lemma \ref{lem:switch}] It suffices to show that the total
weight of the $\mc{N}^*$-edges of $P_t$ is more than the total
weight of $\mc{M^*}$-edges of $P_t$. For each vertex $ {i_r}\in
P_t$ consider the value $y_{i_r}^*$ from the optimum solution to the dual LP
(\ref{eq:LP-Relax}). Using the inequality $w_{ij}\leq y_i^*+y_j^*$ for
edges of $\mc{M}^*$, we obtain: $\sum_{\{i,j\}\in P_t\cap \mc{M}^*}w_{ij} \leq
\left(\sum_{r=-t}^{t}y_{i_r}^*\right) - y_{i_{(-1)^tt}}^*-k_1\ep$ where $k_1$ is the number of $\mc{M}^*$-edges of $P_t$ that belong to
$S$, {i.e.,} the number of $\mc{M}^*$-edges of
$P_t$ endowed with the strict
inequality $w_{ij}\leq y_i^*+y_j^*$, with a gap of at least $\ep$. On
the other hand, using the inequality $w_{ij}\geq y_i^*+y_j^*$ for edges
of $\mc{N}^*$ we have: $\sum_{\{i,j\}\in P_t\cap \mc{N}^*}w_{ij}\geq
\left(\sum_{r=-t}^{t}y_{i_r}^*\right) - y_{i_{(-1)^{t+1}t}}^*+k_2\epsilon$ where now $k_2$ is number of $\mc{N}^*$-edges of $P_t$ that belong to
$S$, or equivalently the number of times the inequality $w_{ij}\geq
y_i^*+y_j^*$ is strict with a gap of at least $\ep$. One finds
$$
\sum_{\{i,j\}\in P_t\cap \mc{N}^*}w_{ij}
-\sum_{\{i,j\}\in P_t\cap \mc{M}^*}w_{ij}
 = y_{i_{(-1)^tt}}^*- y_{i_{(-1)^{t+1}t}}^*+(k_1+k_2)\epsilon
\nonumber\\
\stackrel{(a)}{\geq} (k_1+k_2)\epsilon-2L\stackrel{(b)}{\geq} (k_1+k_2)\epsilon-2L\stackrel{(c)}{>}0.
$$

Here $(a)$ uses definition of $L$
from eq. \ref{eq:y-bd}
 and
$(b)$ uses the fact that for all $i,j:~\la_{ij}^*\geq 0$. The main step is $(c)$, which
uses Lemma \ref{lem:path-2n} as follows. Path $P_t$ has length $2t$,
and each continuous piece of it with length $2n$ has a projection to
the graph $G$ which satisfies the  conditions of Lemma \ref{lem:path-2n}.
This means the path has at least one edge from the set $S$. Thus
$(k_1+k_2)\geq \frac{2t}{2n}>\frac{2L}{\epsilon}$. This completes
the proof of Lemma \ref{lem:switch}.\enp
\end{proof}

{\bf Extension to Possibly Non-Perfect $\mrm{b}$-Matchings.}
Here we note that the algorithm and the results
of the previous sections can be easily generalized to
the case of $\mrm{b}$-matchings (subgraphs $H$ of $G$
such that degree of each vertex $i$ in $H$ is {\emph{at most}} $b_i$).
Let $U(H)\subset V$ be the set of \emph{unsaturated} vertices of $G$
(vertices $i\in V$ such that $deg_H(i)<b_i$).
The minimum weight
$\mrm{b}$-Matching ($\mrm{b}$-MWM), $H^*$,
is the $\mrm{b}$-Matching such that $H^*=\textrm{argmax}_{H\in \M_G(\mrm{b})}\ W_{H}$. Note that $H^*$ does not include any edge with positive weight
because removing such edges from $H^*$ reduces its weight while
keeping it a $\mrm{b}$-matching.

{\bf Asynchronous BP.}
In the remaining we study the asynchronous version of the BP algorithm.
The update equations are exactly analogous to the synchronous
version, but at each time only a subset of the edges are updated
in an arbitrary order.
Consider the set $\vec{E}$ of all directed edges in the $G$; i.e.,
 $\vec{E}=\{(i\to j)~~s.t.~~~ i\neq j\in V\}$.
Let $A$ be a sequence $\vec{E}(1),\vec{E}(2),\ldots$ of subsets of the set $\vec{E}$. Then the asynchronous BP algorithm corresponding to the sequence $A$ can be obtained by modifying only the update rule in the step (3) of the algorithm Sync-BP as follows:
$m_{i\to j}(t) =  w_{ij} - b_i^{th}\textrm{-min}_{ \ell\in N( i)\backslash\{ j\}}\bigg[ m_{ \ell\to  i}(t-1)\bigg]$~~
if $(i\to j)\in\vec{E}(t)$ and if $(i\to j)\notin\vec{E}(t)$ then the message will not be updated, i.e. it remains equal to $m_{i\to j}(t-1)$.

This is the most general form of the asynchronous BP and it includes the synchronous version ($\vec{E}(t)=\vec{E}$
for all $t=1,2,\ldots$) {as a special case}. In many applications, a special case
of the asynchronous BP is used for which each set
$\vec{E}(t)$ consists of a single {element.}

We assume that the sequence $A$ of the updates does not have \emph{redundancies}.
That is, no edge direction $(i\to j)\in \vec{E}$ is re-updated before at least one of its {incoming} edge directions ($(\ell\to i)$ for
$\ell\in N(i)\backslash\{j\}$) is updated. More formally, if $(i\to j)\in \vec{E}(t)\cap\vec{E}(t+s)$ and
$(i\to j)\notin \cup_{r=1}^{s-1}\vec{E}(t+r)$, then at least for one $\ell\in N(i)\backslash\{j\}$,
we should have $(\ell\to i)\in \cup_{r=1}^{s-1}\vec{E}(t+r)$.

Let us denote the above algorithm by Async-BP. We claim that, if each edge
direction $(i\to j)\in\vec{E}$ is updated $\theta(n)$ times, then the
same result as Theorem \ref{thm:main} can be {proved} here.
That is, let $u(t)$ be the minimum number of times that an edge direction of the {graph} $G$ appears in the
sequence $\vec{E}(1),\ldots,\vec{E}(t)$; i.e., $u(t) = \min_{(i\to j)\in\vec{E}}\Big(\bigg|\big\{\ell:~~s.t.~~ 1\leq\ell\leq
t~~~\textrm{and}~~~(i\to j)\in\vec{E}(\ell)\big\}\bigg|\Big)$.
From the definition, $u(t)$ is a non-decreasing function of $t$. We claim that the following result holds:
\begin{theorem}\label{thm:async-b-match}
Assume that the LP relaxation (\ref{eq:LP-Relax})
has no fractional solution.
Then the algorithm Async-BP
converges to $M^*$ after at most $t$ iterations, provided
$u(t)> \frac{2nL}{\epsilon}$.
\end{theorem}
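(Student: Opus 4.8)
The plan is to mirror the proof of Theorem~\ref{thm:main} almost verbatim, after replacing the uniform computation tree $T_i^t$ by a \emph{generalized computation tree} tailored to the update sequence $A$. First I would construct this tree by unrolling the asynchronous recursion at node $i$: the root is $i$ carrying its message configuration at time $t$, and whenever a message $m_{\ell\to i}$ is consumed in an update, its child subtree is obtained by continuing to unroll $m_{\ell\to i}$ at the last time it was updated before that use (a leaf being an edge direction that still carries its initial value). With this definition the exact analog of Lemma~\ref{cor:bp-solves-tree} holds: Async-BP is a dynamic program computing the $\mrm{b}$-TMWPM $\mc{N}^*$ of the generalized tree, and the $b_i$ edges selected at $i$ after time $t$ are precisely the edges incident to the root in $\mc{N}^*$. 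This reduces correctness of Async-BP to showing that, at the root, $\mc{N}^*$ agrees with the lift $\mc{M}^*$ of $M^*$.

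The only place where the asynchronous schedule enters is the \emph{depth} of the tree. In the synchronous case every branch of $T_i^t$ has length exactly $t$, which is what lets the alternating path reach length $2t$. Here the branches are irregular, so the crucial step is a structural lemma: under the non-redundancy assumption, if every edge direction has been updated at least $u(t)$ times, then the generalized computation tree supports an alternating descent of length at least $u(t)$ from the root along every branch the matchings can force. I would prove this by induction on the number of updates, using non-redundancy to guarantee that each re-update of a message genuinely consumes a freshly updated incoming message, so that unrolling cannot reach a leaf (an un-updated edge) before descending $u(t)$ levels. The freedom, already present in the proof of Lemma~\ref{lem:tree=graph}, to \emph{choose} the alternating continuation $\{i_{2r},i_{2r+1}\}\in\mc{M}^*\backslash\mc{N}^*$ (or its dual) at each node is what lets one steer away from prematurely terminating, stale branches.

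Granting this depth bound, the rest is identical to Lemmas~\ref{lem:tree=graph} and \ref{lem:switch}. Assuming for contradiction that $\mc{N}^*$ and $\mc{M}^*$ disagree at the root, I would build an alternating path $P$ of length at least $2u(t)$ exactly as before, switch $\mc{N}^*$-edges for $\mc{M}^*$-edges along $P$ to obtain another perfect tree-$\mrm{b}$-matching, and compare weights using the dual feasibility inequalities $w_{ij}\le y_i^*+y_j^*$ on $\mc{M}^*$ and $w_{ij}\ge y_i^*+y_j^*$ on $\mc{N}^*$. Projecting $P$ to $G$ and applying Lemma~\ref{lem:path-2n} to each length-$2n$ piece shows that $P$ meets $S$ at least $\frac{2u(t)}{2n}=\frac{u(t)}{n}$ times, so the weight gap is at least $\frac{u(t)}{n}\,\epsilon-2L$. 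Since $u(t)>\frac{2nL}{\epsilon}$ this quantity is strictly positive, contradicting optimality of $\mc{N}^*$; hence $\mc{N}^*$ and $\mc{M}^*$ agree at the root and Async-BP has converged to $M^*$ by time $t$.

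The main obstacle is the structural depth lemma of the second paragraph. In the synchronous argument the depth is free, whereas here one must show that the non-redundancy condition, together with the hypothesis that \emph{every} edge direction is updated at least $u(t)$ times, rules out short branches in an a~priori irregular tree. Controlling which branch the matching forces the alternating path down, and proving that this branch cannot terminate before depth $u(t)$, is the delicate point; everything else is a transcription of the synchronous proof with $t$ replaced by $u(t)$.
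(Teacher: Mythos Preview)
Your proposal takes essentially the same approach the paper indicates: the paper itself does not prove Theorem~\ref{thm:async-b-match} here, stating only that the proof ``relies on the notion of generalized computation tree for the asynchronous version of the BP algorithm'' and deferring the details to \cite{BBCZ07-arXiv}. Your construction of the generalized tree by unrolling the asynchronous recursion, the analogue of Lemma~\ref{cor:bp-solves-tree}, and the transcription of Lemmas~\ref{lem:tree=graph} and~\ref{lem:switch} with $t$ replaced by $u(t)$ are exactly what is called for, and you have correctly isolated the one genuinely new ingredient --- the structural depth lemma --- as the crux.

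One caution on that lemma: the induction you sketch (``each re-update genuinely consumes a freshly updated incoming message, so unrolling cannot reach a leaf before depth $u(t)$'') does not by itself force \emph{every} branch to reach depth $u(t)$. Non-redundancy only guarantees that \emph{some} incoming direction $(\ell\to j)$ was refreshed between two consecutive updates of $(j\to i)$, not all of them; a particular child direction could remain stale across many updates of its parent and yield a short branch in the unrolled tree. Whether the freedom to choose among several edges of $\mc{M}^*\setminus\mc{N}^*$ (or $\mc{N}^*\setminus\mc{M}^*$) suffices to steer the alternating path away from all such shallow branches is precisely the point that requires a careful argument, and it is exactly what the present paper defers to the longer version. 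Since the paper gives no proof here, there is nothing further to compare against; your outline matches the paper's stated strategy at the level of detail it provides.
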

Proof of the above theorem relies on the notion of generalized computation tree for the asynchronous version of the BP algorithm which will be  given in the longer version of this paper \cite{BBCZ07-arXiv}.

Finally we note that the same algorithm as Async-BP and the same result
as Theorem \ref{thm:async-b-match} can be stated and proved for the (possibly non-perfect)
$\mrm{b}$-matchings as well.

\section{Acknowledgements}\label{sec:ack}
We would like to thank L\'{a}szl\'{o} Lov\'{a}sz, Andrea Montanari,
Elchannan Mossel and Amin Saberi for useful discussions.
{This work was done while Riccardo Zecchina was a Visiting Researcher in the Theory
Group at Microsoft Research, and was supported by the Microsoft Technical Computing
Initiative.}

{\small

}


\begin{thebibliography}{1}


\bibitem{MeZ02}
M. Mezard and R. Zecchina ``Random K-satisfiability: from an analytic solution to a new efficient algorithm,'' \emph{Phys.Rev. E E},  66, 056126, 2002.

\bibitem{BMZ05}
A.~Braunstein, M.~Mezard, and R.~Zecchina, ``Survey propagation: an
algorithm   for satisfiability,'' \emph{Random Structures and Algorithms}, vol.~27, pp.
  201--226, 2005.


\bibitem{SMW07}
S. Sanghavi, D. Malioutov, A. Willsky
``Linear programming analysis of loopy belief propagation for weighted matching'',
to appear in \emph{NIPS}, 2007.

\bibitem{BSS05}
M.~Bayati, D.~Shah, and M.~Sharma, ``Maximum weight matching via
max-product
  belief propagation,'' \emph{Preliminary version appeared at IEEE ISIT 2005.
  Longer version {\em to appear in} IEEE\ Trans.\ Information\ Theory}, 2007.

\bibitem{HuJ07}
B. Huang, T. Jebara, ``Loopy belief propagation for bipartite
maximum weight b-matching'', \emph{Artificial Intelligence and
Statistics (AISTATS)}, March, 2007.

\bibitem{Ger95}
A. Gerards, ``Matching. Volume 7 of ,'' \emph{Hand book of Operation Research and Management Science}, Chapter 3, pp. 135-224. North-Holland, 1995.

\bibitem{Pul95}
W. Pulleyblank, ``Matchings and extensions.'' Volume 1 of \emph{Handbook of Combinatorics}, Chapter 3, pp. 179-232, North Holland, 1995.

\bibitem{MeP86}
M. Mezard and G. Parisi ``Mean-field equations for the matching and travelling Salesman problems,'' \emph{Eurhophysics letters},
Vol. 2, pp. 913-918, 1986.

\bibitem{Ald01}
D. Aldous, ``The zeta (2) Limit in the Random Assignment Problem,''
\emph{Random Structures and Algorithms}, Vol.\ 18, pp.~381-418, 2001.


\bibitem{FWK05}
J. Feldman, M. Wainwright and D. Karger, ``Using linear programming
to decode binary linear codes'', \emph{IEEE Transactions on
Information Theory}, vol. 51, pp. 954-972, 2005.

\bibitem{VoK04} P.O. Vontobel and R. Koetter, ``On the relationship between linear programming decoding and min-sum algorithm decoding,'' \emph{Proc. ISITA} 2004, Parma, Italy, pp. 991--996, Oct. 10-13, 2004.

\bibitem{VoK05} P.O. Vontobel and R. Koetter, ``Graph-cover decoding and finite-length analysis of message-passing iterative decoding of LDPC codes,'' to apprear in \emph{IEEE Trans. Inform. Theory}, http://www.arxiv.org/abs/cs.IT/0512078.

\bibitem{WJW05}
M. Wainwright, T. Jaakkola, and A. Willsky, ``MAP estimation via agreement on trees: message-passing
and linear programming'', {\em IEEE Transactions on Information Theory},
51(11):3697:3711, 2005.

\bibitem{WYM07}
Y. Weiss, C. Yanover and T. Meltzer ``MAP Estimation, Linear
Programming and Belief Propagation with Convex Free Energies,''
\emph{UAI},  2007.

\bibitem{MoV07J}
C. Moallemi and B. Van Roy, ``A Message-Passing Paradigm for Resource Allocation,'' preprint June 2007.

\bibitem{BSS06}
M.~Bayati, D.~Shah, and M.~Sharma, ``Max-product for maximum weight matching: convergence, correctness and LP duality,'' in \emph{IEEE Int.\ Symp.\ Information\ Theory}, 2006.

\bibitem{Ber88}
D.~P. Bertsekas, ``The auction algorithm: A distributed relaxation
method for  the assignment problem,'' \emph{Annals of Operations Research}, vol.~14, 1988.

\bibitem{GNS05}
D. Gamarnik,  T. Nowicki and G. Swirscsz, ``Maximum Weight Independent Sets and Matchings in Sparse Random Graphs. Exact Results using the Local Weak Convergence Method'',  \emph{Random Structures and Algorithm}, Vol.28, No. 1, pp. 76-106, 2005.


\bibitem{ZdM06}
L. Zdeborov\'{a} and M. M\'{e}zard, ``The number of matchings in random graphs'', \emph{J. Stat. Mech.}, 2006.

\bibitem{BBCZ07-arXiv}
M. Bayati, C. Borgs, J. Chayes and R. Zecchina, ``Belief-Propagation for Weighted $\mrm{b}$-Matchings on Arbitrary Graphs and its
Relation to Linear Programs with Integer Solutions'', \emph{in arXiv},  http://www.arxiv.org/abs/0709.1190v2, September 8, 2007.

\bibitem{San07} S. Sanghavi, ``Equivalence of LP Relaxation and Max-Product for Weighted Matching in General Graphs'', \emph{IEEE Information Theory Workshop}, September 2007, perliminary version arXiv:0705.0760, May 5 2007.






















%
%
%
%
%
%
%
%
%
%
%
%
%





\end{thebibliography}
\end{document}